\newcommand{\rd}{{\rm d}}
\newcommand{\vecx}{{\mathbf x}}
\newcommand{\vecX}{{\mathbf X}}
\newcommand{\vecw}{{\mathbf w}}
\newtheorem{theorem}{Theorem}
\newtheorem{lemma}{Lemma}
\newtheorem{corollary}{Corollary}
\begin{document}
\title{Numerical evaluation of Gaussian mixture entropy}
\author{Basheer Joudeh and Boris \v{S}kori\'{c}}
\date{}
\maketitle
\begin{abstract}
We develop an approximation method for the differential entropy $h(\vecX)$ of a $q$-component Gaussian mixture in $\mathbb{R}^n$. We provide two examples of approximations using our method denoted by $\bar{h}^{\mathrm{Taylor}}_{C,m}(\vecX)$ and $\bar{h}^{\mathrm{Polyfit}}_{C}(\vecX)$. We show that $\bar{h}^{\mathrm{Taylor}}_{C,m}(\vecX)$ provides an easy to compute lower bound to $h(\vecX)$, while $\bar{h}^{\mathrm{Polyfit}}_{C}(\vecX)$ provides an accurate and efficient approximation to $h(\vecX)$. $\bar{h}^{\mathrm{Polyfit}}_{C}(\vecX)$ is more accurate than known bounds, and conjectured to be much more resilient than the approximation of \cite{approx} in high dimensions.
\end{abstract}
%========================================================================
\section{Introduction}
\subsection{Differential entropy of Gaussian mixtures}
A Gaussian mixture is a probability density function on ${\mathbb R}^n$ of the form
$f(\vecx)=\sum_{j=1}^q p_j {\cal N}_{\vecw_j,K_j}(\vecx)$,
where $\vecx\in{\mathbb R}^n$,
the $p_j>0$ are weights satisfying $\sum_{j=1}^q p_j=1$,
and ${\cal N}_{\vecw,K}$ stands for a Gaussian distribution with mean $\vecw$ and covariance matrix~$K$.
In words, $f$ is a mixture of $q$ Gaussian pdfs, with arbitrary weights, allowing
each Gaussian to be of different shape and displacement.
A mixture like this occurs e.g. when a stochastic process, with probability mass function $p_1,\ldots,p_q$,
determines which distribution holds for~$\vecx$.
Such situations occur in many scientific areas.
Furthermore, Gaussian mixtures are often employed as function approximators, by virtue of being
smooth and localized.
They have been used in a wide variety of studies, e.g.\;on
diffusion models in physics and machine learning \cite{DIFFREF2,DIFFREF1,DIFFREF4,DIFFREF3},
non-adiabatic thermodynamics \cite{THERMREF1},
wireless communication \cite{WIREF1},
wireless authentication \cite{CYBREF2},
Byzantine attacks \cite{CYBREF1},
gene expression \cite{BIOREF1,BIOREF2},
dark matter kinematics \cite{ASTROREF1},
and quasar spectra \cite{ASTROREF2}.

The differential entropy $h(X)$ of a continuous random variable $X\in{\cal X}$, with probability density function $f_X$, is defined as
$h(X)=-\int_{\cal X}{\rm d}x\; f_X(x)\ln f_X(x)$.
It represents the continuum limit of the (discrete) Shannon entropy for the
probability mass function $f_X(x_i)\triangle x$, where the volume $\triangle x$ is sent to zero
and the infinite contribution $\ln \frac1{\triangle x}$ is subtracted.

Analytically computing or estimating the differential entropy of a Gaussian mixture
is a notoriously difficult problem.
Even numerical evaluation can be problematic in high dimensions.
The special case of a single Gaussian is simple and yields
$h(X)=\frac 12\ln\det( 2\pi e  K)$.

%---------------------------------------
\subsection{Related work}

Various methods have been proposed to approximate the differential entropy of a Gaussian mixture.

A loose upper bound can be obtained from the fact that
a Gaussian distribution maximizes entropy, given the first and second moment.
This yields \cite{bound}
$h(X) \leq \frac 12\ln\det(2\pi e\Sigma)$,
with
$\Sigma=\sum_{i=1}^q p_i (w_i w_i^{\rm T}+K_i)-\left(\sum_{i=1}^q p_i w_i\right)(\sum_{j=1}^q p_j w_j)^{\rm T}$.

In \cite{q2} a numerical approximation was given for the case $q=2,n=1,w_2=-w_1$.

A sequence of upper bounds for arbitrary $q$ was obtained in \cite{bound},
but only for $n=1$. Furthermore, the results are not in closed form, and the bounds in the sequence do not get progressively tighter.

Another method is to replace the density $f$ inside the logarithm by
a single Gaussian $\bar f$ which has mean and covariance equal to the mixture.
This leads to an exact expression containing the relative entropy,
$h(X)=\frac 12\ln \det (2\pi e\Sigma)-D(f||\bar f)$;
one can then find approximations or bounds for the relative entropy, as in \cite{relat,relat2}.
This is done with Monte Carlo integration, which has the drawback of being computationally demanding and
not giving an analytic expression.

In \cite{approx} an approximation was obtained by performing a Taylor expansion of $\ln f(\vecx)$
in the variables $\vecx-\vecw_j$.
To avoid the need for expansion powers above 2,
they introduced the trick of representing wide Gaussians approximately as
the sum of several narrow Gaussians, in the $f$ outside the logarithm.
The Taylor expansion as well as the splitting trick introduce inaccuracies.

As shown in \cite{approx}, one can obtain a basic upper bound
$h(X)\leq  \sum_{j=1}^{q} p_j\ln \frac1{p_j} + \sum_{j=1}^{q} p_j\frac 12\ln\det (2\pi e K_j)$.
This is significantly tighter than the above mentioned bound
$\frac 12\ln [(2\pi e)^n\det \Sigma]$;
it is exact in the case of a single Gaussian,
and it gets arbitrarily close to the true value of $h(X)$ when the overlap between the components of the mixture becomes negligible.
A refinement was also introduced by means of merging parts of the mixture that are clustered together. However, in configurations where the mixture components have significant overlap whilst keeping the mixture non-trivial, this bound becomes inaccurate.

In \cite{JoudehSkoric} a method was proposed to compute $h(X)$ averaged over the means~$\vecw_1,\ldots,\vecw_q$.
This is of limited use when the means are fixed.

%---------------------------------------
\subsection{Contributions}
We introduce a new method for estimating the differential entropy of a Gaussian mixture.
We approximate $f\ln\frac1f$ by a polynomial ${\rm poly}(f)$.
For any positive integer $k$, the power $f^k=(p_1 {\cal N}_1+\ldots+p_q{\cal N}_q)^k$ can be
written as a multinomial sum containing a product of powers of Gaussians.
Hence the integral $\int\!\rd x\; {\rm poly}(f(x))$ can be computed analytically.
This yields a systematic way to approximate and/or lower bound~$h(X)$ by analytic expressions.
\begin{itemize}[leftmargin=4mm,itemsep=0mm]
\item
We consider polynomials that minimize the square error
$\int\!\rd s\; {\rm weight}(s)[s\ln\frac1s-{\rm poly}(s)]^2$
on the range of $f$, with tunable function ${\rm weight}(s)=s^r$.
We observe that negative $r$ gives better results than positive $r$,
which is to be expected since most of the volume in ${\mathbb R}^n$ has $f(x)$ close to zero.
In particular, $r\approx-2$ performs best in the mixture configurations that we have studied,
yielding relative errors in $h(X)$ of less than 1\% already for the degree-3 polynomial fit.
\item
We consider the truncated Taylor series
$-\ln \frac fm\approx \sum_{k=1}^{\cdots} \frac1k(1-\frac fm)^k$
for some constant~$m$.
This too gives rise to an approximation for $f\ln\frac1f$ that is polynomial in $f$.
While not as accurate as the above mentioned fit, it guarantees that each $k$-term has the same sign
if $m\geq \max_{x\in{\mathbb R}^n} f(x) $,
and hence gives rise to a sequence of increasingly accurate analytic lower bounds on~$h(X)$. We observe that the relative error in $h(X)$ is still several percent even at polynomial degree~10.
\end{itemize}

In Section \ref{generalpolsec} we obtain a general recipe to approximate the differential entropy via a polynomial approximation as shown in Corollary \ref{fpowerintegralcorr}. In Section \ref{Taylorsec} we apply Corollary \ref{fpowerintegralcorr} to the Taylor series of the logarithm to obtain a specific approximation for the differential entropy. Although it is not a very accurate approximation, it serves as an easy to compute lower bound. In Section \ref{polyfitsec} we apply Corollary \ref{fpowerintegralcorr} again using a polynomial approximation of $f(s)=-s \ln s$, which gives an accurate approximation to the differential entropy. In Section \ref{resultsec} we show numerical results of our approximations for various configurations, and in Section \ref{discsec} we assess our method and give pointers towards future work.

%========================================================================
\section{Polynomial approximation}
\subsection{General polynomial}
\label{generalpolsec}
Consider a random variable $\mathbf{X}\in{\mathbb R}^n$ whose probability density function is a Gaussian mixture with weights $\{p_i\}_{i=1}^{q}$. The Gaussian pdfs have covariance matrices $\{K_i\}_{i=1}^{q}$, and they are centered on points $\vecw_1,\ldots,\vecw_q \in {\mathbb R}^n$, and we write $\hat\vecw=(\vecw_1,\ldots,\vecw_q)$.
\begin{equation}
\label{gaussmix}
    f_{\vecX}(\vecx)
    = \sum_{j=1}^q p_j{\cal N}_{\vecw_j,K_j}(\vecx)
    = (2\pi)^{-\frac n2}  \sum_{j=1}^q p_j (\det K_j)^{-1/2}\exp \left\{- \dfrac{1}{2}(\vecx-\vecw_j)^\mathrm T K_j^{-1}(\vecx-\vecw_j)\right\}.
\end{equation}
We let $g_i(\vecx) \equiv {\cal N}_{\vecw_{i},K_i}(\vecx)$ in the following. The differential entropy is $h(\vecX)=-\mathbb{E}_{\mathbf X}[\ln f_{\mathbf X}]$, which can be written as:
\begin{equation}
\begin{aligned}
h(\vecX)=-\mathbb{E}_{\mathbf X}[\ln f_{\mathbf X}]=-\mathbb{E}_{\mathbf X}[\ln mm^{-1}f_{\mathbf X}]=-\ln m -\mathbb{E}_{\mathbf X}\left[\ln \dfrac{ f_{\mathbf X}}{m}\right],
\end{aligned}
\end{equation}
where $m$ can be chosen to enforce that the argument of the logarithm lies in a certain interval. We would like to approximate the entropy by means of a polynomial approximation in powers of $f_{\mathbf X}$:
\begin{equation}\label{polyfit0}
  -f_{\mathbf{X}}\ln \dfrac{f_{\mathbf{X}}}{m}\approx\sum_{a=1}^{C}{c_a} f_{\mathbf{X}}^a,
\end{equation}
i.e. we consider an order-$C$ polynomial approximation. We set $c_0 = 0$ to avoid diverging integrals. Equation \eqref{polyfit0} is more relaxed than truncating a series of the form $\sum_{a=1}^{\infty}{c_a} f_{\mathbf{X}}^a$, i.e. we allow $\{c_a\}$ to depend on $C$. Using equation \eqref{polyfit0}, the differential entropy $h(\vecX)$ reads:
\begin{equation}
\label{entropyseries}
h(\vecX)\approx-\ln m+\sum_{a=1}^{C}c_a\int f^{a}_{\mathbf X}(\mathbf x)\mathrm{d}\mathbf{x},
\end{equation}
and by choosing $\{c_a\}$ appropriately, and analytically evaluating the integral, we obtain an efficient approximation for $h(\mathbf X)$. We start by rewriting the integral in equation \eqref{entropyseries}, and the result is Corollary \ref{fpowerintegralcorr}.
\begin{lemma}
\label{productgausslemma}
Let $\{g_i(\vecx)\}_{i=1}^{q}$ be given in accordance with equation \eqref{gaussmix}, $\hat{t}$ s.t. $\sum_{i=1}^{q}t_i=a$ for some $a$. Furthermore, let ${M}^{-1}$, and ${\bm{\mu}}$ be given by:
\begin{align}
&{M}^{-1}=\sum_{j=1}^{q}t_{j}K_j^{-1},\\
&{\bm{\mu}}=\sum_{j=1}^{q}t_{j}{M} K_j^{-1}\mathbf{w}_j,
\end{align}
then we have:
\begin{equation}
\int \prod_{j=1}^{q}g^{t_j}_j(\mathbf x)\mathrm{d}\mathbf{x}=D(\hat{t}),
\end{equation}
where $D(\hat{t})$ is given by:
\begin{equation}
\label{Deq}
{D}(\hat t)=(2\pi)^{-n({a}-1)/2}\left(\prod_{i=1}^{q}(\det K_i)^{-t_{i}/2}\right)(\det {M})^{1/2}\exp\left\{-\dfrac{1}{2}\left(\sum_{l=1}^{q}t_{l}\mathbf{w}^{\mathrm{T}}_l K^{-1}_l \mathbf{w}_l-{\bm{\mu}}^{\mathrm{T}} {M}^{-1} {\bm{\mu}}\right)\right\}.
\end{equation}
\end{lemma}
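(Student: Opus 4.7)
The plan is to use the standard fact that a product (with possibly fractional powers) of Gaussian densities is, up to a scalar factor, again a Gaussian in $\vecx$, and then evaluate the resulting Gaussian integral in closed form. The work is entirely bookkeeping: collect all prefactors, reduce the exponent to a quadratic form in $\vecx$, complete the square, and read off the result.

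Concretely, I first substitute the explicit expression $g_j(\vecx)=(2\pi)^{-n/2}(\det K_j)^{-1/2}\exp\{-\tfrac12(\vecx-\vecw_j)^{\mathrm T}K_j^{-1}(\vecx-\vecw_j)\}$ into $\prod_j g_j^{t_j}(\vecx)$. The constant prefactor pulls out as $(2\pi)^{-na/2}\prod_i(\det K_i)^{-t_i/2}$, using $\sum_j t_j = a$. The exponent is $-\tfrac12 \sum_j t_j(\vecx-\vecw_j)^{\mathrm T}K_j^{-1}(\vecx-\vecw_j)$, which I expand into a quadratic part $\vecx^{\mathrm T}(\sum_j t_j K_j^{-1})\vecx$, a linear part $-2\vecx^{\mathrm T}\sum_j t_j K_j^{-1}\vecw_j$, and a constant part $\sum_j t_j \vecw_j^{\mathrm T}K_j^{-1}\vecw_j$. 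By definition of $M$ and $\bm\mu$ in the lemma, the quadratic coefficient is $M^{-1}$ and the linear coefficient equals $-2\vecx^{\mathrm T}M^{-1}\bm\mu$.

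Next I complete the square:
\begin{equation*}
\vecx^{\mathrm T}M^{-1}\vecx - 2\vecx^{\mathrm T}M^{-1}\bm\mu = (\vecx-\bm\mu)^{\mathrm T}M^{-1}(\vecx-\bm\mu) - \bm\mu^{\mathrm T}M^{-1}\bm\mu.
\end{equation*}
The integral over $\vecx\in\mathbb R^n$ of $\exp\{-\tfrac12(\vecx-\bm\mu)^{\mathrm T}M^{-1}(\vecx-\bm\mu)\}$ evaluates to $(2\pi)^{n/2}(\det M)^{1/2}$ by the standard multivariate Gaussian normalization. The terms that do not depend on $\vecx$, namely $-\tfrac12\bigl(\sum_l t_l \vecw_l^{\mathrm T}K_l^{-1}\vecw_l - \bm\mu^{\mathrm T}M^{-1}\bm\mu\bigr)$, come out of the integral as a multiplicative exponential. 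Combining the three prefactors $(2\pi)^{-na/2}\cdot(2\pi)^{n/2} = (2\pi)^{-n(a-1)/2}$, $\prod_i(\det K_i)^{-t_i/2}$, and $(\det M)^{1/2}$, yields exactly $D(\hat t)$ as stated.

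I do not anticipate a genuine obstacle here; the only subtle point is making sure that $M^{-1}$ (a sum of symmetric positive definite matrices weighted by the $t_j$) is itself invertible so that $M$ and the completion of the square are well-defined, and that the Gaussian integral converges, i.e.\ $M^{-1}$ is positive definite. This is automatic when all $t_j \geq 0$ and not all are zero, and more generally requires $\sum_j t_j K_j^{-1}\succ 0$; this mild positivity condition is implicitly assumed by the lemma's statement, since otherwise $D(\hat t)$ is not defined either. Everything else is a matter of carefully tracking exponents and determinants.
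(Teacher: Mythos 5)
Your proposal is correct and follows essentially the same route as the paper's own proof: substitute the explicit Gaussian densities, pull out the prefactor $(2\pi)^{-na/2}\prod_i(\det K_i)^{-t_i/2}$, identify the quadratic and linear coefficients with $M^{-1}$ and $M^{-1}\bm\mu$, complete the square, and evaluate the standard Gaussian integral to get $(2\pi)^{n/2}(\det M)^{1/2}$. Your remark about the positive definiteness of $\sum_j t_j K_j^{-1}$ is a sensible observation that the paper leaves implicit.
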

\begin{proof}
See Appendix \ref{productgausslemmaapp}.
\end{proof}
It follows that the integral in equation \eqref{entropyseries} can be rewritten as shown in the following corollary.
\begin{corollary}
\label{fpowerintegralcorr}
The differential entropy $h(\vecX)$ of the Gaussian mixture is approximated by:
\begin{equation}\label{fpowerintegral}
\bar{h}_{\vec{c},m}(\vecX)=-\ln m+\sum_{a=1}^{C}c_a\sum_{{\substack{t_1, t_2,\dots, t_q \geq 0\\t_1+t_2+\dots+t_q=a}}} \binom{a}{t_1,t_2,\dots,t_q}\left(\prod_{i=1}^{q}p_i^{t_i}\right){D}(\hat t).
\end{equation}
\end{corollary}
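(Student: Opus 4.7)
The plan is to derive \eqref{fpowerintegral} directly from \eqref{entropyseries} by expanding each power $f_{\vecX}^a$ via the multinomial theorem and then invoking Lemma \ref{productgausslemma} to evaluate the resulting Gaussian integrals. The starting point is \eqref{entropyseries}, which has already absorbed the polynomial approximation step, so the task reduces to rewriting $\int f_{\vecX}^a(\vecx)\,\rd\vecx$ in the required form.

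First I would substitute $f_{\vecX}=\sum_{j=1}^q p_j g_j$ from \eqref{gaussmix} and apply the multinomial expansion to obtain
\begin{equation*}
f_{\vecX}^a(\vecx)=\Bigl(\sum_{j=1}^q p_j g_j(\vecx)\Bigr)^a=\sum_{\substack{t_1,\ldots,t_q\geq 0\\ t_1+\cdots+t_q=a}}\binom{a}{t_1,\ldots,t_q}\prod_{i=1}^q\bigl(p_i g_i(\vecx)\bigr)^{t_i}.
\end{equation*}
Next I would use linearity of the integral to move the sum and the (constant) factors $\binom{a}{t_1,\ldots,t_q}\prod_i p_i^{t_i}$ outside, leaving precisely the integral $\int\prod_{j=1}^q g_j^{t_j}(\vecx)\,\rd\vecx$. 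Since the multi-index $\hat t=(t_1,\ldots,t_q)$ satisfies $\sum_i t_i=a$, the hypotheses of Lemma \ref{productgausslemma} hold, and that integral is exactly $D(\hat t)$ as defined in \eqref{Deq}. Substituting this back into \eqref{entropyseries} yields \eqref{fpowerintegral}.

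There is no real obstacle here; the argument is essentially a bookkeeping exercise that combines the multinomial theorem with Lemma \ref{productgausslemma}. The only small point to verify is that interchanging the finite multinomial sum with the integral is legitimate (which is immediate since the sum has finitely many terms and each Gaussian integrand is absolutely integrable), and that the $a=0$ term would have been problematic had it not been excluded by the stipulation $c_0=0$ made just after \eqref{polyfit0}.
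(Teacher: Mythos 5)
Your proposal is correct and follows exactly the same route as the paper's own proof in Appendix \ref{fpowerintegralcorrapp}: a multinomial expansion of $f_{\vecX}^a$, interchange of the finite sum with the integral, application of Lemma \ref{productgausslemma} to identify each integral as $D(\hat t)$, and substitution back into equation \eqref{entropyseries}. Nothing is missing.
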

\begin{proof}
See Appendix \ref{fpowerintegralcorrapp}.
\end{proof}
\subsection{Taylor series approximation}
\label{Taylorsec}
We perform a Taylor series for the logarithm only, in order to obtain the coefficients $\{c_a\}$ in equation \eqref{polyfit0}. We define $Z \equiv 1-m^{-1}f_{\mathbf X}(\mathbf X)$ and we write:
\begin{equation}
\label{flnftaylor}
\begin{aligned}
-f_{\vecX}\ln \dfrac{f_{\vecX}}{m}=-f_{\vecX}\ln (1-Z)=f_{\vecX}\sum_{k=1}^{\infty}\dfrac{1}{k}Z^k.
\end{aligned}
\end{equation}
The range of allowed $m$ values is the one that keeps $Z$ in the radius of convergence of the Taylor series: $|Z|<1$, i.e. $m \geq \max_{\mathbf x \in \mathbb R^n}f_{\vecX}(\mathbf{x})/2$. For $m= \max_{\mathbf x \in \mathbb R^n}f_{\vecX}(\mathbf{x})/2$ then $Z\in [-1,1)$, while for $m= \max_{\mathbf x \in \mathbb R^n}f_{\vecX}(\mathbf{x})$ then $Z \in [0,1)$. Furthermore, values above $\max_{\mathbf x \in \mathbb R^n}f_{\vecX}(\mathbf{x})$ such as $m=\sum_i p_ig_i(\mathbf{w}_i)$ shrink the range of $Z$ from below towards 1. By performing the Taylor expansion for the logarithm, we obtain a series of the form given by equation \eqref{polyfit0} as shown in Theorem \ref{taylortheo}.
\begin{theorem}
\label{taylortheo}
The procedure of performing the Taylor expansion of the logarithm as detailed in equation \eqref{flnftaylor} yields the following coefficients $\{c_a^{\mathrm{Taylor}}\}_{a=1}^{C}$ of the corresponding polynomial approximation in equation \eqref{polyfit0}:
\begin{equation}
c_{a}^{\mathrm{Taylor}}=\begin{cases}
                             H_{C-1}, & a=1, \\
          \dfrac{(-1)^{a+1}}{m^{a-1}}\dfrac{1}{a-1}\binom{C-1}{a-1}, & \mathrm{otherwise}.\end{cases},
\end{equation}
and the corresponding differential entropy approximation as given by equation \eqref{fpowerintegral} can be written~as:
\begin{equation}
\bar{h}^{\mathrm{Taylor}}_{C,m}(\vecX)=-\ln m+H_{C-1}-m\sum_{a=1}^{C-1}\dfrac{B_{a+1}}{a}\binom{C-1}{a},
\end{equation}
where $H_k$ is the $k$-th Harmonic number, and $\{B_a\}$ are given by:
\begin{equation}
{B}_{{a}}=\dfrac{(-1)^{{a}}}{m^{{a}}}\sum_{{\substack{t_1+t_2+\dots+t_q={a}\\t_1, t_2,\dots, t_q \geq 0}}}\binom{{a}}{t_1,t_2,\dots,t_q}\left(\prod_{i=1}^{q}p_i^{t_i}\right){D}(\hat{t}).
\end{equation}
\end{theorem}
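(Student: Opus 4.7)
}

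The plan is to first derive the coefficients $c_a^{\mathrm{Taylor}}$ by expanding each $Z^k$ binomially, then collect powers of $f_{\vecX}$, and finally substitute into the Corollary \ref{fpowerintegralcorr} formula while exploiting $\int f_{\vecX}\rd\vecx=1$.

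First I would truncate the series \eqref{flnftaylor} at $k=C-1$, since $f_{\vecX}Z^k$ contributes powers $f_{\vecX}^1,\dots,f_{\vecX}^{k+1}$, so truncating at $k=C-1$ yields exactly a polynomial of order $C$ in $f_{\vecX}$. Expanding $Z^k=(1-f_{\vecX}/m)^k$ with the binomial theorem and reindexing $a=j+1$ gives
\begin{equation}
-f_{\vecX}\ln\frac{f_{\vecX}}{m}\approx\sum_{a=1}^{C}\left(\sum_{k=\max(1,a-1)}^{C-1}\frac{(-1)^{a-1}}{k\,m^{a-1}}\binom{k}{a-1}\right)f_{\vecX}^{a}.
\end{equation}
Reading off the inner bracket gives $c_1^{\mathrm{Taylor}}=\sum_{k=1}^{C-1}\frac1k=H_{C-1}$ immediately, and for $a\geq 2$ the task reduces to simplifying $\sum_{k=a-1}^{C-1}\tfrac{1}{k}\binom{k}{a-1}$.

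The key algebraic step is the identity $\frac{1}{k}\binom{k}{a-1}=\frac{1}{a-1}\binom{k-1}{a-2}$, which I would verify by writing out the factorials. After substituting this into the sum and shifting the summation index, the hockey-stick identity $\sum_{j=a-2}^{C-2}\binom{j}{a-2}=\binom{C-1}{a-1}$ collapses the telescoping sum, yielding
\begin{equation}
c_a^{\mathrm{Taylor}}=\frac{(-1)^{a+1}}{m^{a-1}}\frac{1}{a-1}\binom{C-1}{a-1}\qquad(a\geq 2),
\end{equation}
using $(-1)^{a-1}=(-1)^{a+1}$. This is the part of the argument I expect to be the main (though not severe) obstacle, as it requires recognizing the right binomial rewriting; everything else is bookkeeping.

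For the second claim I would plug the $c_a^{\mathrm{Taylor}}$ into \eqref{fpowerintegral}. Observe that the inner multinomial sum in \eqref{fpowerintegral} is precisely $(-m)^a B_a$ when expressed through the definition of $B_a$ given in the theorem. For the $a=1$ term the crucial simplification is that the multinomial sum equals $\sum_j p_j\int g_j \rd\vecx = 1$ (equivalently $B_1=-1/m$), so $c_1^{\mathrm{Taylor}}\cdot 1 = H_{C-1}$ appears as a clean additive constant. For $a\geq 2$, multiplying $c_a^{\mathrm{Taylor}}$ by $(-m)^a B_a$ produces the factor $-m\,B_a\binom{C-1}{a-1}/(a-1)$; reindexing $a\mapsto a+1$ yields the stated sum $-m\sum_{a=1}^{C-1}\frac{B_{a+1}}{a}\binom{C-1}{a}$, completing the proof.
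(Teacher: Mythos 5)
Your proposal is correct and follows essentially the same route as the paper's own proof: binomial expansion of $Z^k$, truncation at $k=C-1$, interchange of the $k$ and $a$ sums, the identity $\tfrac{1}{k}\binom{k}{a-1}=\tfrac{1}{a-1}\binom{k-1}{a-2}$ combined with the hockey-stick identity to collapse the inner sum, and finally substitution into Corollary \ref{fpowerintegralcorr} using the normalization of the $a=1$ multinomial sum and the reindexing $a\mapsto a+1$. No gaps.
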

\begin{proof}
See Appendix \ref{taylortheorapp}
\end{proof}
Note that one can show that $h(\vecX)$ is given by the infinite series:
\begin{equation}
h(\vecX)=-\ln m-m\sum_{k=1}^{\infty}\sum_{a=0}^{k}\dfrac{{B}_{a+1}}{k}\binom{k}{a},
\end{equation}
which is not the result of an infinite power series in $f^{a}(\vecX)$ as evident by the $a$ summation which is inside the $k$ summation. Taking the first $C-1$ terms and changing the order of summations makes the coefficients $\{c_{a}^{\mathrm{Taylor}}\}$ dependent on $C$. Note that Theorem \ref{taylortheo} with $m=\max_{\mathbf x \in \mathbb R^n}f_{\vecX}(\mathbf{x})$ provides a lower bound for $h(\vecX)$. For some applications having a bound is more important than having a good approximation.
\subsection{Polyfit}
\label{polyfitsec}
\subsubsection{General Polyfit}
Suppose we have a function $f(s)$ defined on $\mathcal{I}=[a,b]$ that we wish to approximate via a polynomial expansion. We write:
\begin{equation}
\label{generalpolyfit}
f(s)\approx\hat{f}(s)=\sum_{i=1}^{C}d_{\mathcal{I},i}s^{i},
\end{equation}
and we define the error resulting from our estimation as:
\begin{equation}
\label{errorfun}
E=\dfrac{1}{b-a}\int_{a}^{b} w(s)(f(s)-\hat{f}(s))^2\mathrm{d}s,
\end{equation}
where we consider the possibility of favoring portions of $\mathcal{I}$ more than others by choosing the weight function $w(s)$ appropriately, and $w(s)=1$ reduces $E$ to the mean squared error. Our choice of $w$ and $\mathcal{I}$ must be such that $M_{\mathcal{I}}$ in Lemma \ref{generalpolyfitlemma} is invertible.
\begin{lemma}
\label{generalpolyfitlemma}
Let $M_{\mathcal{I}}$, $\vec{d}_{\mathcal{I}}$, and $\vec{z}_{\mathcal{I}}$ be given by:
\begin{align}
&(M_{\mathcal{I}})_{ij} \equiv \int_{a}^{b} w(s)s^{i+j}\mathrm{d}s,\\
&(\vec{d}_{\mathcal{I}})_i\equiv d_{\mathcal{I},i},\\
&(\vec{z}_{\mathcal{I}})_i\equiv \int_{a}^{b} w(s)f(s)s^{i}\mathrm{d}s,
\end{align}
then for invertible $M_{\mathcal{I}}$, the coefficients $d_{\mathcal{I},i}$ in equation \eqref{generalpolyfit} that minimize the error function $E$ in equation \eqref{errorfun} are given by:
\begin{equation}
\label{dmatrixeq}
\vec{d}_{\mathcal{I}}=M^{-1}_{\mathcal{I}}\vec{z}_{\mathcal{I}}.
\end{equation}
\end{lemma}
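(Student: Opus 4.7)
The plan is to treat $E$ as a quadratic function of the coefficient vector $\vec{d}_{\mathcal{I}} = (d_{\mathcal{I},1},\ldots,d_{\mathcal{I},C})$ and compute its unique stationary point by setting the gradient to zero. Since $\hat{f}$ depends linearly on $\vec{d}_{\mathcal{I}}$ and $E$ depends quadratically on $\hat{f}$, the map $\vec{d}_{\mathcal{I}} \mapsto E$ is a (convex) quadratic form, so the first-order condition will be both necessary and sufficient.

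Concretely, I would start by substituting $\hat{f}(s)=\sum_{i=1}^C d_{\mathcal{I},i} s^i$ into the expansion of the square in \eqref{errorfun} to get
\begin{equation*}
E=\frac{1}{b-a}\int_a^b w(s)\,f(s)^2\,\mathrm{d}s - \frac{2}{b-a}\sum_{i=1}^{C}d_{\mathcal{I},i}\int_a^b w(s)f(s)s^i\,\mathrm{d}s + \frac{1}{b-a}\sum_{i,j=1}^{C}d_{\mathcal{I},i}d_{\mathcal{I},j}\int_a^b w(s)s^{i+j}\,\mathrm{d}s.
\end{equation*}
Using the definitions of $\vec{z}_{\mathcal{I}}$ and $M_{\mathcal{I}}$ in the lemma, this is just $E=\tfrac{1}{b-a}(\text{const}-2\vec{d}_{\mathcal{I}}^{\mathrm{T}}\vec{z}_{\mathcal{I}}+\vec{d}_{\mathcal{I}}^{\mathrm{T}}M_{\mathcal{I}}\vec{d}_{\mathcal{I}})$. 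Taking the partial derivative with respect to each $d_{\mathcal{I},k}$ and equating to zero gives the normal equations $M_{\mathcal{I}}\vec{d}_{\mathcal{I}}=\vec{z}_{\mathcal{I}}$, which under the stated invertibility assumption on $M_{\mathcal{I}}$ yield \eqref{dmatrixeq}.

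To confirm that this stationary point is a minimum rather than some other critical point, I would observe that the Hessian of $E$ with respect to $\vec{d}_{\mathcal{I}}$ is $\tfrac{2}{b-a} M_{\mathcal{I}}$. Because $M_{\mathcal{I}}$ is the Gram matrix of the monomials $s,s^2,\ldots,s^C$ under the $w$-weighted inner product on $[a,b]$, it is automatically positive semidefinite for any non-negative weight; combined with the hypothesis that $M_{\mathcal{I}}$ is invertible, it is in fact positive definite. Hence the unique stationary point is the global minimizer.

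There is no real obstacle here: the result is the standard least-squares normal equations specialized to monomial bases, and the only subtlety is checking that the invertibility of $M_{\mathcal{I}}$ together with non-negativity of $w$ is enough to guarantee strict convexity, which follows from the Gram-matrix structure noted above.
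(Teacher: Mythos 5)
Your proposal is correct and follows essentially the same route as the paper: derive the normal equations $M_{\mathcal{I}}\vec{d}_{\mathcal{I}}=\vec{z}_{\mathcal{I}}$ from the first-order conditions and confirm a global minimum via the Gram-matrix structure of the Hessian. The only (minor, and arguably cleaner) difference is that you deduce positive definiteness from positive semidefiniteness plus the stated invertibility of $M_{\mathcal{I}}$, whereas the paper appeals to linear independence of $\{\sqrt{w(s)}\,s^i\}_{i=1}^{C}$; both are valid.
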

\begin{proof}
See Appendix \ref{generalpolyfitlemmaproof}.
\end{proof}
\subsubsection{Entropy estimate obtained from Polyfit}
\label{entropypolyfit}
We now wish to use Lemma \ref{generalpolyfitlemma} to obtain an approximation for $h(\vecX)$. We first apply our estimator to the function $f(s)=-s\ln s$ in the following lemma.
\begin{lemma}
\label{slogslemma}
Let $f(s)=-s\ln s$ on $\mathcal{I}=(a,b]$, $r \in \mathbb{R}$ if $a>0$ and $r>-3$ if $a=0$. Let $w(s)=s^r$, then the estimator $\hat{f}(s)$ in equation \eqref{generalpolyfit} is given by:
\begin{equation}
-s\ln s \approx \sum_{i=1}^{C}d_{\mathcal{I},i}s^{i},
\end{equation}
where $d_{\mathcal{I},i}$ are obtained by solving equation \eqref{dmatrixeq} with $M_{\mathcal{I}}$ and $\vec{z}_{\mathcal{I}}$ given by:
\begin{align}
&(M_{\mathcal{I}})_{ij} = \dfrac{b^{i+j+r+1}-a^{i+j+r+1}}{i+j+r+1},\\
\nonumber
&(\vec{z}_{\mathcal{I}})_i=\dfrac{1}{(i+r+2)^2}\left[b^{i+r+2}(1-(i+r+2)\ln b)-a^{i+r+2}(1-(i+r+2)\ln a)\right].
\end{align}
\end{lemma}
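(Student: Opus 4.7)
The plan is to apply Lemma \ref{generalpolyfitlemma} directly with the specified weight $w(s)=s^r$ and target function $f(s)=-s\ln s$; all that remains is to evaluate, in closed form, the two families of integrals that define $M_{\mathcal{I}}$ and $\vec{z}_{\mathcal{I}}$.

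First I would substitute $w(s)=s^r$ into the definition of $(M_{\mathcal{I}})_{ij}$, which reduces it to the elementary power integral $\int_a^b s^{i+j+r}\,\mathrm{d}s = \frac{b^{i+j+r+1}-a^{i+j+r+1}}{i+j+r+1}$, matching the stated form. Next I would compute $(\vec{z}_{\mathcal{I}})_i = -\int_a^b s^{i+r+1}\ln s\,\mathrm{d}s$ by integration by parts: setting $k=i+r+1$, $u=\ln s$ and $\mathrm{d}v=s^k\,\mathrm{d}s$ yields the antiderivative $\frac{s^{k+1}\ln s}{k+1}-\frac{s^{k+1}}{(k+1)^2}$. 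Evaluating between $a$ and $b$ and pulling the common factor $1/(i+r+2)^2$ out immediately reproduces the claimed expression for $(\vec{z}_{\mathcal{I}})_i$.

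The only nontrivial point is the convergence of these integrals when $a=0$. Because the polynomial indices satisfy $i,j\geq 1$, the exponents appearing in the two integrands are at least $i+j+r\geq r+2$ and $i+r+1\geq r+2$ respectively; the condition $r>-3$ ensures that these powers exceed $-1$, so both $s^{i+j+r}$ and $s^{i+r+1}\ln s$ are integrable near the origin (the logarithmic factor is dominated by any strictly positive power). The case $a>0$ is immediate since the integrands are smooth on the compact interval $[a,b]$.

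I do not anticipate a real obstacle: the proof is a pair of standard integrations plus a one-line convergence check, after which the conclusion follows from Lemma \ref{generalpolyfitlemma}.
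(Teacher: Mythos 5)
Your proposal is correct and is exactly the paper's argument: the paper's proof of this lemma is literally ``It follows from equation \eqref{generalpolyfit}, Lemma \ref{generalpolyfitlemma}, and direct calculation,'' and your direct calculation (the power integral for $M_{\mathcal{I}}$, integration by parts for $\vec{z}_{\mathcal{I}}$, and the convergence check that $i+j+r$ and $i+r+1$ exceed $-1$ when $r>-3$ and $a=0$) supplies precisely the details the paper leaves implicit.
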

\begin{proof}
It follows from equation \eqref{generalpolyfit}, Lemma \ref{generalpolyfitlemma}, and direct calculation.
\end{proof}
\begin{corollary}
\label{Mtildecorr}
Let $f(s)=-s\ln s$ on $\mathcal{I}=(0,b]$, $w(s)=s^r$ with $r >-3$, then $\{d_{\mathcal{I},i}\}$ are given by:
\begin{equation}\label{dtodtilde}
d_{\mathcal{I},i}=\begin{cases}
                    \tilde{d}_1-\ln b, & \mathrm{if \,} i=1 \\
                    b^{1-i}\tilde{d}_i, & \mathrm{{otherwise}}
                  \end{cases},
\end{equation}
where $\vec{\tilde{d}}$ is obtained by solving:
\begin{equation}
\vec{\tilde{d}}=\tilde{M}^{-1}\vec{\tilde{z}},
\end{equation}
with $\tilde{M}$ and $\vec{\tilde{z}}$ given by:
\begin{equation}
\tilde{M}_{ij} = \dfrac{1}{i+j+r+1},\,\,\,\,\,\tilde{z}_i=\dfrac{1}{(i+r+2)^2}.
\end{equation}
\end{corollary}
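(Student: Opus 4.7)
The plan is to specialise Lemma~\ref{slogslemma} to $a=0$ and then factor out the $b$-dependence from both $M_{\mathcal{I}}$ and $\vec{z}_{\mathcal{I}}$ in such a way that the linear system $M_{\mathcal{I}}\vec{d}_{\mathcal{I}}=\vec{z}_{\mathcal{I}}$ reduces to the $b$-free system $\tilde{M}\vec{\tilde{d}}=\vec{\tilde{z}}$ plus a correction. Setting $a=0$ in Lemma~\ref{slogslemma} is legitimate because $r>-3$ ensures $i+j+r+1>0$ and $i+r+2>0$ for all $i,j\geq 1$, so the $a^{\,\cdot}$ terms vanish and one obtains
\begin{equation*}
(M_{\mathcal{I}})_{ij}=\frac{b^{i+j+r+1}}{i+j+r+1},\qquad (\vec{z}_{\mathcal{I}})_i=\frac{b^{i+r+2}}{(i+r+2)^2}-\frac{b^{i+r+2}\ln b}{i+r+2}.
\end{equation*}

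The key observation is then that, with the diagonal matrix $D=\mathrm{diag}(b,b^2,b^3,\ldots)$, one can write $M_{\mathcal{I}}=b^{\,r+1}D\tilde{M}D$, and that the awkward $1/(i+r+2)$ factor in the $\ln b$ contribution to $\vec{z}_{\mathcal{I}}$ is exactly the first row (or column, by symmetry) of $\tilde{M}$, since $\tilde{M}_{i1}=1/(i+r+2)$. Consequently
\begin{equation*}
\vec{z}_{\mathcal{I}}=b^{\,r+2}\,D\bigl(\vec{\tilde{z}}-\ln b\cdot \tilde{M}\vec{e}_1\bigr),
\end{equation*}
where $\vec{e}_1$ is the first standard basis vector. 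This rewriting is the main (and only nontrivial) step, so I would flag it explicitly; everything else is linear algebra.

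Substituting these factorisations into $\vec{d}_{\mathcal{I}}=M_{\mathcal{I}}^{-1}\vec{z}_{\mathcal{I}}$, the $b^{\,r+1}$ and one copy of $D$ cancel, giving
\begin{equation*}
\vec{d}_{\mathcal{I}}=b\,D^{-1}\tilde{M}^{-1}\bigl(\vec{\tilde{z}}-\ln b\cdot\tilde{M}\vec{e}_1\bigr)=b\,D^{-1}\bigl(\vec{\tilde{d}}-\ln b\cdot\vec{e}_1\bigr).
\end{equation*}
Reading off components, $(\vec{d}_{\mathcal{I}})_i=b^{\,1-i}\bigl(\tilde{d}_i-\ln b\,\delta_{i1}\bigr)$, which matches the two cases of \eqref{dtodtilde}: for $i=1$ the prefactor $b^{1-i}=1$ and the $-\ln b$ correction survives, while for $i\geq 2$ the correction vanishes and we are left with $b^{1-i}\tilde{d}_i$. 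I expect the main conceptual obstacle to be spotting that $\tilde{M}\vec{e}_1$ is the vector needed to absorb the $\ln b$ term; once this is noted, the rest is a one-line cancellation and the invertibility of $\tilde M$ follows from that of $M_{\mathcal I}$ assumed in Lemma~\ref{generalpolyfitlemma}.
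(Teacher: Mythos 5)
Your proof is correct and is essentially the paper's argument recast in matrix form: the factorisations $M_{\mathcal{I}}=b^{\,r+1}D\tilde{M}D$ and $\vec{z}_{\mathcal{I}}=b^{\,r+2}D(\vec{\tilde{z}}-\ln b\,\tilde{M}\vec{e}_1)$ are exactly the paper's row-by-row division by $b^{\,i+r+2}$ combined with absorbing the $\ln b$ term into the $j=1$ entry of the sum. The matrix packaging is arguably cleaner, but the decomposition and the key observation --- that $1/(i+r+2)$ is the first column of $\tilde{M}$ --- coincide with the paper's proof.
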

\begin{proof}
See Appendix \ref{Mtildecorrproofapp}.
\end{proof}
Corollary \ref{Mtildecorr} shows that the problem of finding a polynomial fit for $f(s)=-s\ln s$ on $(0,b]$ is equivalent to finding the inverse of $\tilde{M}$, which is independent of $b$. However, $\tilde{M}$ is an ill-conditioned matrix as the elements $\tilde{M}_{ij}$ become increasingly smaller for high $i+j$. For example, if we take $r=-2$ then $\tilde{M}$ is the Hilbert matrix with inverse elements: $(H^{-1})_{ij} = (-1)^{i+j} (i + j - 1) \binom{C+i-1}{C-j} \binom{C+j-1}{C-i} \left[ \binom{i+j-2}{i-1} \right]^2$, which become numerically unstable at $C \sim 10$. If we write $f_{\mathrm{max}}\equiv \max_{\mathbf x \in \mathbb R^n}f_{\vecX}(\mathbf{x})$, then for our Gaussian mixture $f_\vecX$, the possible values are in $(0,f_{\mathrm{max}}]$. In a very large part of $\mathbb{R}^n$ we find ourselves in the tails of the Gaussian mixture, which corresponds to $f_\vecX(\vecx)$ being close to zero. Hence, we want our weight function $w(s)$ to reflect this, and emphasize the region close to zero in the application of Corollary \ref{Mtildecorr}. Note that we can write:
\begin{equation}
\label{entropyins}
\begin{aligned}
&h(\vecX)=-\int f_{\vecX}(\vecx)\ln f_{\vecX}(\vecx) \mathrm{d}\vecx=-\int \left(\int_{0}^{f_{\mathrm{max}}}\delta(s-f_{\vecX}(\vecx))s\ln s\,\mathrm{d}s\right) \mathrm{d}\vecx\\
&=-\int_{0}^{f_{\mathrm{max}}} s\ln s\left(\int\delta(s-f_{\vecX}(\vecx))\mathrm{d}\vecx\right)\mathrm{d}s=\int_{0}^{f_{\mathrm{max}}} f(s)V(s)\,\mathrm{d}s,
\end{aligned}
\end{equation}
where $V(s)=\int\delta(s-f_{\vecX}(\vecx))\mathrm{d}\vecx$ can be thought of as the $(n-1)$-dimensional volume where the function $f_{\vecX}$ equals $s$. We see from equation \eqref{entropyins} that our entropy approximation is not only determined by how well we can estimate $f(s)$ using Corollary \ref{Mtildecorr}, but also by $V(s)$, which for Gaussian mixtures will put more weight around $s=0$. That is, we expect $r<0$ in Corollary \ref{Mtildecorr} to produce better results for our entropy approximation than $r>0$, albeit they will produce worse estimates of $f(s)$ over any interval. Ideally we set $w(s)=V(s)$ in Lemma \ref{slogslemma}, but $V(s)$ has no closed form solution. This leads us to the following entropy approximation given by Theorem \ref{polyfittheo}.
\begin{theorem}
\label{polyfittheo}
Let $r >-3$, $\mathcal{I}=(0,f_{\mathrm{max}}]$, then the entropy approximation $\bar{h}^{\mathrm{Polyfit}}_{C}(\vecX)$ based on Lemma \ref{slogslemma}, Corollary \ref{Mtildecorr}, and Corollary \ref{fpowerintegralcorr} is given by:
\begin{equation}
\bar{h}^{\mathrm{Polyfit}}_{C}(\vecX)=\sum_{a=1}^{C}c^{\mathrm{Polyfit}}_{a}\sum_{{\substack{t_1+t_2+\dots+t_q=a\\t_1, t_2,\dots, t_q \geq 0}}} \binom{a}{t_1,t_2,\dots,t_q}\left(\prod_{i=1}^{q}p_i^{t_i}\right){D}(\hat t),
\end{equation}
where $\vec{c}^{\mathrm{\,Polyfit}}$ is given by:
\begin{equation}
c_{a}^{\mathrm{Polyfit}}=\begin{cases}
                    \tilde{d}_1-\ln f_{\mathrm{max}}, & \mathrm{if \,} a=1 \\
                    f_{\mathrm{max}}^{1-a}\tilde{d}_a, & \mathrm{{otherwise}}
                  \end{cases},\quad \vec{\tilde{d}}=\tilde{M}^{-1}\vec{\tilde{z}},
\end{equation}
with $\tilde{M}_{ij}$ and ${\tilde{z}}_i$ given by:
\begin{equation}
\tilde{M}_{ij} = \dfrac{1}{i+j+r+1},\quad \tilde{z}_i=\dfrac{1}{(i+r+2)^2}.
\end{equation}
\end{theorem}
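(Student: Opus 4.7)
The plan is to compose the three preceding results in sequence. No new calculation is needed beyond careful bookkeeping of the integration and a relabelling of coefficients, so the proof should be short.

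First, I would invoke Corollary \ref{Mtildecorr} with $b = f_{\mathrm{max}}$ (using the hypothesis $r > -3$), obtaining the pointwise polynomial approximation $-s\ln s \approx (\tilde{d}_1 - \ln f_{\mathrm{max}})\, s + \sum_{a=2}^C f_{\mathrm{max}}^{1-a}\, \tilde{d}_a\, s^a$ valid for $s \in (0, f_{\mathrm{max}}]$, where $\vec{\tilde{d}} = \tilde{M}^{-1}\vec{\tilde{z}}$ with $\tilde{M}_{ij}$ and $\tilde{z}_i$ as in the statement. Since $f_\vecX(\vecx) \in (0, f_{\mathrm{max}}]$ everywhere on $\mathbb{R}^n$, I substitute $s = f_\vecX(\vecx)$ pointwise and integrate the resulting identity over $\mathbb{R}^n$. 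The left hand side is $h(\vecX)$ by definition, while the right hand side collapses to $\sum_{a=1}^C c_a^{\mathrm{Polyfit}} \int f_\vecX^a(\vecx)\,\rd\vecx$ once the $a=1$ coefficient absorbs the $-\ln f_{\mathrm{max}}$ constant via $\int f_\vecX(\vecx)\,\rd\vecx = 1$. This is exactly the identification $c_a^{\mathrm{Polyfit}} = d_{\mathcal{I},a}$ stated in the theorem.

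Finally, each remaining integral $\int f_\vecX^a(\vecx)\,\rd\vecx$ is handled by the same computation that underlies Corollary \ref{fpowerintegralcorr}: expand $(\sum_j p_j g_j)^a$ by the multinomial theorem and apply Lemma \ref{productgausslemma} to each monomial, giving $\sum_{t_1+\cdots+t_q=a} \binom{a}{t_1,\ldots,t_q}\bigl(\prod_i p_i^{t_i}\bigr) D(\hat{t})$. Plugging this in yields the claimed closed form for $\bar{h}^{\mathrm{Polyfit}}_{C}(\vecX)$. I do not anticipate any substantive obstacle; the one cosmetic point worth flagging is the apparent disappearance of the $-\ln m$ prefactor that appears in Corollary \ref{fpowerintegralcorr}. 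In the Taylor branch that prefactor arose from the rewriting $\ln f = \ln m + \ln(f/m)$, whereas here the logarithm is never split off: it is baked directly into the polynomial fit of $s \mapsto -s\ln s$ and survives only as the $-\ln f_{\mathrm{max}}$ hidden inside $c_1^{\mathrm{Polyfit}}$.
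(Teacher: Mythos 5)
Your proposal is correct and follows exactly the paper's route: the paper's own proof is a one-line invocation of Lemma \ref{slogslemma}, Corollary \ref{Mtildecorr}, and Corollary \ref{fpowerintegralcorr} applied to $-f_{\vecX}\ln f_{\vecX}$, which is precisely the composition you spell out. Your remark about the vanishing $-\ln m$ prefactor (effectively $m=1$ here, with the $-\ln f_{\mathrm{max}}$ absorbed into $c_1^{\mathrm{Polyfit}}$) is an accurate and worthwhile clarification of a detail the paper leaves implicit.
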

\begin{proof}
It follows from the direct application of Lemma \ref{slogslemma}, Corollary \ref{Mtildecorr}, and Corollary \ref{fpowerintegralcorr} to $-f_{\vecX}\ln f_{\vecX}$.
\end{proof}
In Appendix \ref{app:volume}, we derive the volume function $V(s)$ for the trivial case $q=1$. Around $s\approx 0$ it blows up as $1/s$, with some additional logarithmic divergence; this gives some intuition why we get good results at $r\leq -1$.
\section{Numerical results}
\label{resultsec}
\subsection{Polyfit results}
We now present results of Theorem \ref{polyfittheo} applied to different Gaussian mixtures in order to estimate $h(\mathbf X)$. The performance metric we use is the percentage error compared to the exact value for the entropy. The mixtures we used to show the validity of our approximation are shown in Table \ref{tab}. The examples cover various values of $n$ and $q$. This is not meant to be all encompassing, but still covers a sufficient part of the parameter space to showcase the accuracy of our approximation.
\begin{table}[H]
\centering
{\small
\begin{longtable}{|c|c|p{14.5cm}|}
     \caption{\it The different Gaussian mixtures tested using Theorem
\ref{polyfittheo}. The results for the entropy are shown in Figure
\ref{polyfitfig}.}
     \label{tab}
     \\
     \hline
     $q$ & $n$ & \textbf{Mixture Parameters} \\
     \hline \hline
     3 & 2 &
     $\mathbf{w}_1 = \begin{pmatrix} 1 & 0 \end{pmatrix}^{\mathrm{T}}$,
\vphantom{$M^{M^{2^2}}$}
     $\mathbf{w}_2 = \begin{pmatrix} -1 & 0 \end{pmatrix}^{\mathrm{T}}$,
     $\mathbf{w}_3 = \begin{pmatrix} 0 & 1.5
\end{pmatrix}^{\mathrm{T}}$, \quad
     $K_i = I_2$, \quad
     $\mathbf{p} = \begin{pmatrix} 0.2 & 0.3 & 0.5
\end{pmatrix}^{\mathrm{T}}$ \\
     \cline{2-3}
     & 2 &
     $\mathbf{w}_1 = \begin{pmatrix} 0 & 0 \end{pmatrix}^{\mathrm{T}}$,
\vphantom{$M^{M^{2^2}}$}
     $\mathbf{w}_2 = \begin{pmatrix} -1.5 & 1.5 \end{pmatrix}^{\mathrm{T}}$,
     $\mathbf{w}_3 = \begin{pmatrix} 1.5 & 1.5 \end{pmatrix}^{\mathrm{T}}$,
     $K_1 = \begin{pmatrix} 1 & 0 \\ 0 & 3 \end{pmatrix}$,
     $K_2 = \begin{pmatrix} 1 & 0.2 \\ 0.2 & 1 \end{pmatrix}$,
     $K_3 = \begin{pmatrix} 1 & -0.2 \\ -0.2 & 1 \end{pmatrix}$, \quad\quad
     $\mathbf{p} = \begin{pmatrix} 0.2 & 0.3 & 0.5
\end{pmatrix}^{\mathrm{T}}$ \\ \hline\hline

     4 & 3 &
     $\mathbf{w}_1 = \begin{pmatrix} 0 & 0 & 0
\end{pmatrix}^{\mathrm{T}}$, \vphantom{$M^{M^{2^2}}$}
     $\mathbf{w}_2 = \begin{pmatrix} -1.5 & 1.5 & -1.5
\end{pmatrix}^{\mathrm{T}}$,
     $\mathbf{w}_3 = \begin{pmatrix} 1.5 & 1.5 & 1.5
\end{pmatrix}^{\mathrm{T}}$,
     $\mathbf{w}_4 = \begin{pmatrix} 1 & 1 & 1 \end{pmatrix}^{\mathrm{T}}$,
     $K_i = I_3$, \quad\quad
     $\mathbf{p} = \begin{pmatrix} 0.2 & 0.3 & 0.3 & 0.2
\end{pmatrix}^{\mathrm{T}}$ \\ \cline{2-3}
     & 8 &
     $\mathbf{w}_1 = \begin{pmatrix} 0 & 0 & 0 & 0 & 0 & 0 & 0 & 0
\end{pmatrix}^{\mathrm{T^{\vphantom{2}} }}$,
     $\mathbf{w}_2 = \begin{pmatrix} -1.5 & 1.5 & -1.5 & 1.5 & -1.5 &
1.5 & -1.5 & 1.5 \end{pmatrix}^{\mathrm{T}}$,
     $\mathbf{w}_3 = \begin{pmatrix} 1.5 & 1.5 & 1.5 & 1.5 & 1.5 & 1.5 &
1.5 & 1.5 \end{pmatrix}^{\mathrm{T}}$,
     $\mathbf{w}_4 = \begin{pmatrix} 1 & 1 & 1 & 1 & 1 & 1 & 1 & 1
\end{pmatrix}^{\mathrm{T}}$, \newline
     $K_i = I_8$, \quad\quad
     $\mathbf{p} = \begin{pmatrix} 0.2 & 0.3 & 0.3 & 0.2
\end{pmatrix}^{\mathrm{T}}$ \\ \hline\hline

     5 & 4 &
     $\mathbf{w}_1 = \begin{pmatrix} 0 & 0 & 0 & 0
\end{pmatrix}^{\mathrm{T}}$, \vphantom{$M^{M^{2^2}}$}
     $\mathbf{w}_2 = \begin{pmatrix} -1.5 & 1.5 & -1.5 & 1.5
\end{pmatrix}^{\mathrm{T}}$,
     $\mathbf{w}_3 = \begin{pmatrix} 1.5 & 1.5 & 1.5 & 1.5
\end{pmatrix}^{\mathrm{T}}$, \newline
     $\mathbf{w}_4 = \begin{pmatrix} 1 & 1 & 1 & 1
\end{pmatrix}^{\mathrm{T}}$,
     $\mathbf{w}_5 = \begin{pmatrix} -3 & 3 & -3 & 3
\end{pmatrix}^{\mathrm{T}}$, \quad
     $K_i = I_4$, \quad
     $\mathbf{p} = \begin{pmatrix} 0.2 & 0.3 & 0.3 & 0.1 & 0.1
\end{pmatrix}^{\mathrm{T}}$ \\ \hline
\end{longtable}
}
\end{table}
From Figure \ref{polyfitfig}, we see that for low dimensional cases ($q=3, n=2$), our approximation converges quickly to the exact value of the entropy (within less than $1\%$ error), and we can fairly trust numerical accuracy for those cases.
As for the optimal $r$ value for the weight function, $r=-2$ is consistently the best choice for those parameters, with $r>0$ being less favorable, as predicted in Section \ref{entropypolyfit}.
At $q=4$, we start to see divergent behavior as $n$ becomes larger.
For the lower dimensional case of $q=4, n=3$, the behavior is similar to previous cases with very high accuracy and $r=-2$ being the most favorable weight function.
However, in the case of $q=4, n=8$, the result diverges as $C$ grows larger, although up to $C=7$ the approximation is fairly accurate.
Furthermore, the best performing $r$ in this case is $r=-2.5$ with an error percentage of around $-0.5\%$ at $C=6$.
We conjecture that the divergent behavior in this case is caused by numerical inaccuracies in the evaluation of multinomial coefficients.
\begin{figure}[H]
    \centering
    % First case: Two subfigures
    \begin{subfigure}[b]{0.43\linewidth}
        \centering
        \includegraphics[width=0.9\linewidth]{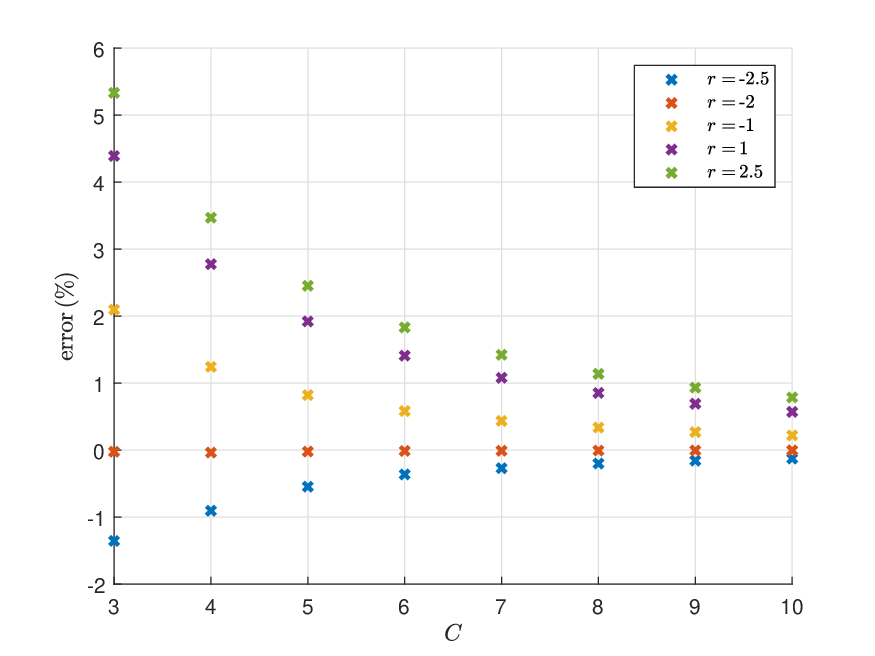}
        \caption{$q=3, n=2, K_i=I_2$}
        \label{fig:case1_graph1}
    \end{subfigure}
    \hspace{0.02\linewidth} % Slight horizontal gap
    \begin{subfigure}[b]{0.43\linewidth}
        \centering
        \includegraphics[width=0.9\linewidth]{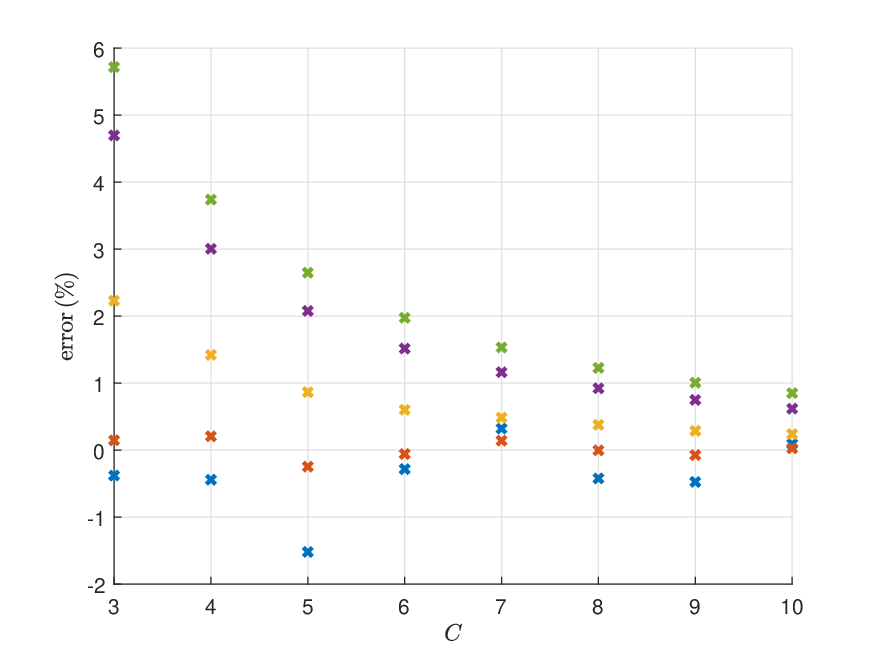}
        \caption{$q=3, n=2$}
        \label{fig:case1_graph2}
    \end{subfigure}

    \vspace{-0em} % Reduce vertical space between rows

    % Second case: Two subfigures
    \begin{subfigure}[b]{0.43\linewidth}
        \centering
        \includegraphics[width=0.9\linewidth]{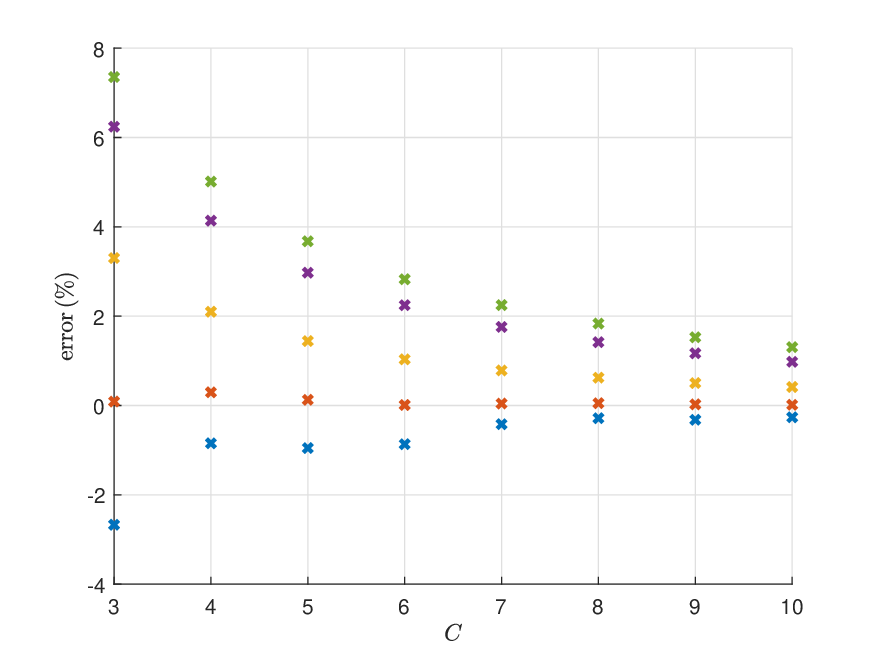}
        \caption{$q=4, n=3$}
        \label{fig:case2_graph1}
    \end{subfigure}
    \hspace{0.02\linewidth} % Slight horizontal gap
    \begin{subfigure}[b]{0.43\linewidth}
        \centering
        \includegraphics[width=0.9\linewidth]{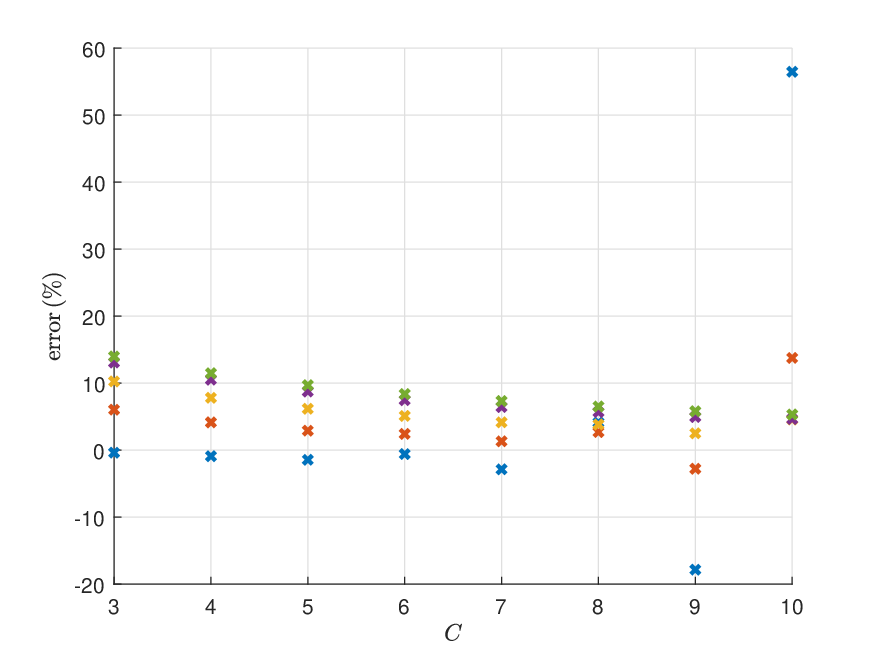}
        \caption{$q=4, n=8$}
        \label{fig:case2_graph2}
    \end{subfigure}

    \vspace{-0em} % Reduce vertical space between rows

    % Third case: Single figure, same size as others
    \begin{subfigure}[b]{0.43\linewidth}
        \centering
        \includegraphics[width=0.9\linewidth]{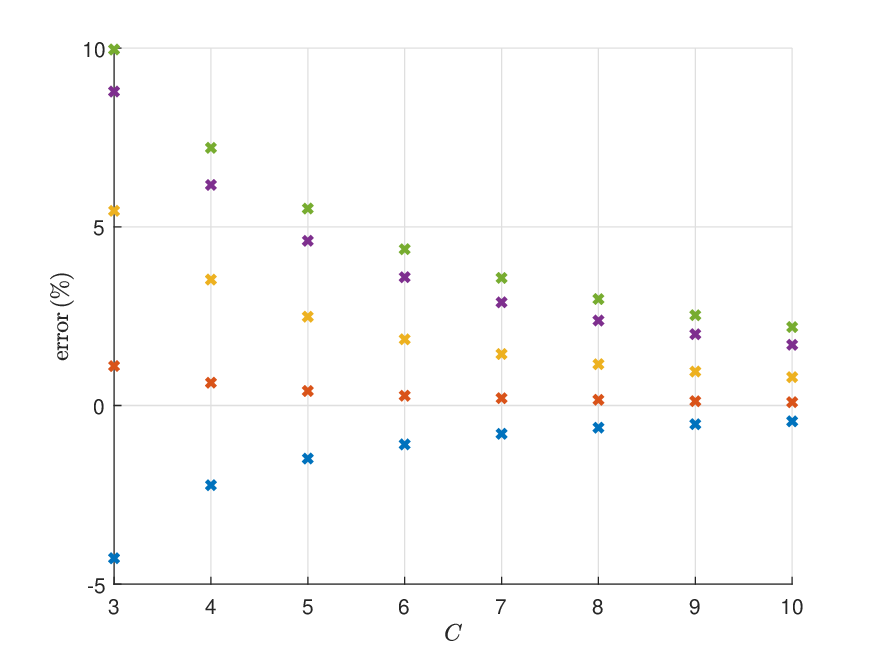}
        \caption{$q=5, n=4$}
        \label{fig:case3_graph1}
    \end{subfigure}

    \caption{The results of testing Theorem \ref{polyfittheo} on the Gaussian mixtures in Table \ref{tab}. The graphs show the percentage error $\left({\textstyle\frac{h(\vecX)-\bar{h}^{\mathrm{Polyfit}}_{C}(\vecX)}{h(\vecX)}}100\right)$ of the approximation $\bar{h}^{\mathrm{Polyfit}}_{C}(\vecX)$ at different values of $C$.}
    \label{polyfitfig}
\end{figure}
For $q=5,n=4$, we again have convergent behavior yielding an accurate approximation of the entropy with $r=-2$ being the best weight function.
This prompts us to conclude that our method produces a very highly accurate approximation when $r \in [-2.5,-2]$
and $C$ is kept between 3 and 8 in order to avoid computational inaccuracies. Note that the upper bound mentioned in the introduction $h(\vecX) \leq \sum_{j=1}^{q} p_j\ln \frac1{p_j} + \sum_{j=1}^{q} p_j\frac 12\ln\det (2\pi e K_j)$ produces values that are within $3-19 \%$ above the true entropy for the configurations in Table \ref{tab}. In Figure \ref{funapproxfig}, we show Polyfit approximations of $f(s)=-s \ln s$ as described in Lemma \ref{slogslemma} for different values of $r$. We see clearly in Figure \ref{funapproxfig} that for $r=-2$ the approximation is overall inaccurate compared to $r=-1$ and $r=1$. However, $r=-2$ yields the best approximation for the entropy $h(\vecX)$. As we discussed in Section \ref{entropypolyfit}, the reason for this is that the value of $r=-2$ makes the approximation better around $s \approx 0$, which is where $V(s)$ is highest for Gaussian mixtures.
\begin{figure}[H]
    \centering
        \centering
        \includegraphics[width=0.6\linewidth]{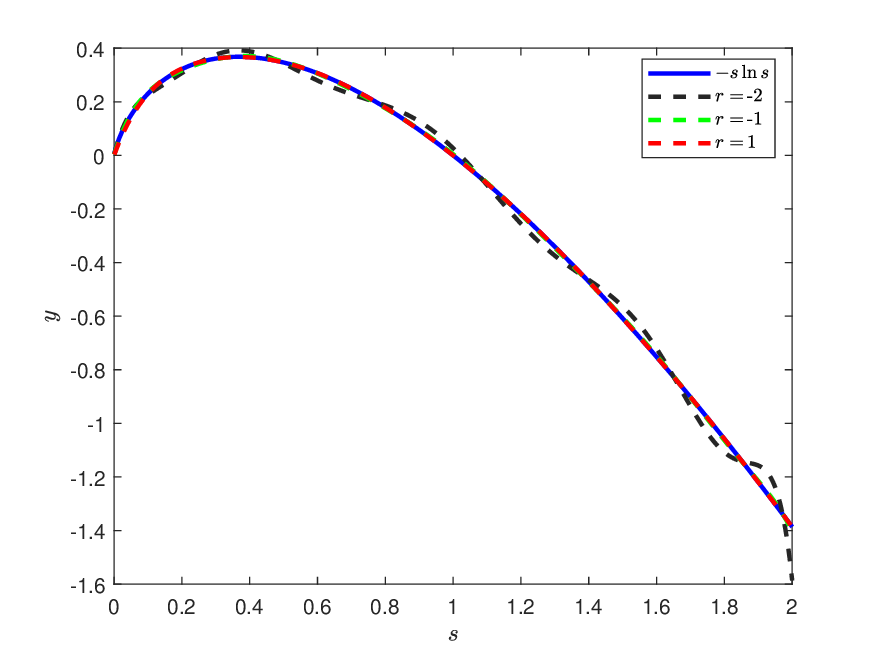}
        \caption{Polyfit approximations of $f(s)=-s \ln s$ for different values of $r$ on the interval (0,2] in accordance with Lemma \ref{slogslemma}.}
        \label{funapproxfig}
\end{figure}
\subsection{Taylor series results}
We now show results for our approximation $\bar{h}^{\mathrm{Taylor}}_{C,m}(\vecX)$ based on the Taylor expansion of the logarithm as outlined in Theorem \ref{taylortheo}. We do not expect it to be as accurate as $\bar{h}^{\mathrm{Polyfit}}_{C,m}(\vecX)$, however, it provides us with an easy to compute lower bound for the entropy, as mentioned earlier in Section \ref{Taylorsec}. We show the percentage error compared to the exact value of the entropy for different values of $\beta$ for two mixtures that we take from Table \ref{tab}. We use the case of $q=3, n=2$ with non-spherical covariance matrices, as well as the case of $q=4, n=8$. The results are shown in Figure \ref{taylorfig}. As we can see from Figure \ref{taylorfig}, $\bar{h}^{\mathrm{Taylor}}_{C,m}(\vecX)$ is not as accurate as $\bar{h}^{\mathrm{Polyfit}}_{C,m}(\vecX)$ as expected, especially in the higher dimensional case. However, it does provide monotonous and convergent behavior. Furthermore, the case $\beta=1/2$ is the fastest converging case, which is also the case where computational inaccuracies start to appear the fastest as we increase $C$.
\begin{figure}[H]
    \centering
    \begin{subfigure}[b]{0.46\linewidth}
        \centering
        \includegraphics[width=\linewidth]{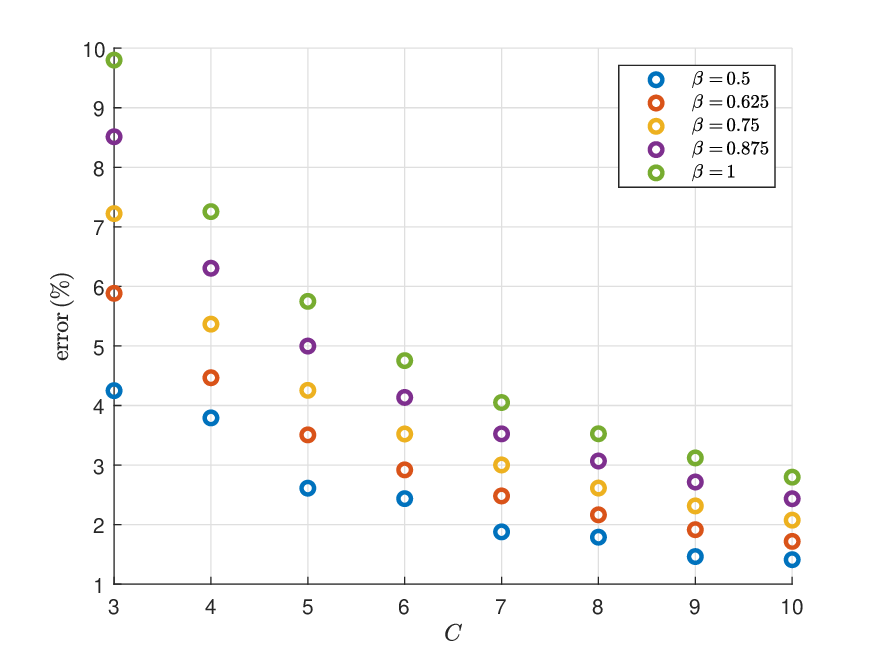}
        \caption{$q=3, n=2$}
        \label{fig:case1_graph1 taylor}
    \end{subfigure}
    \hspace{0.02\linewidth} % Slight horizontal gap
    \begin{subfigure}[b]{0.46\linewidth}
        \centering
        \includegraphics[width=\linewidth]{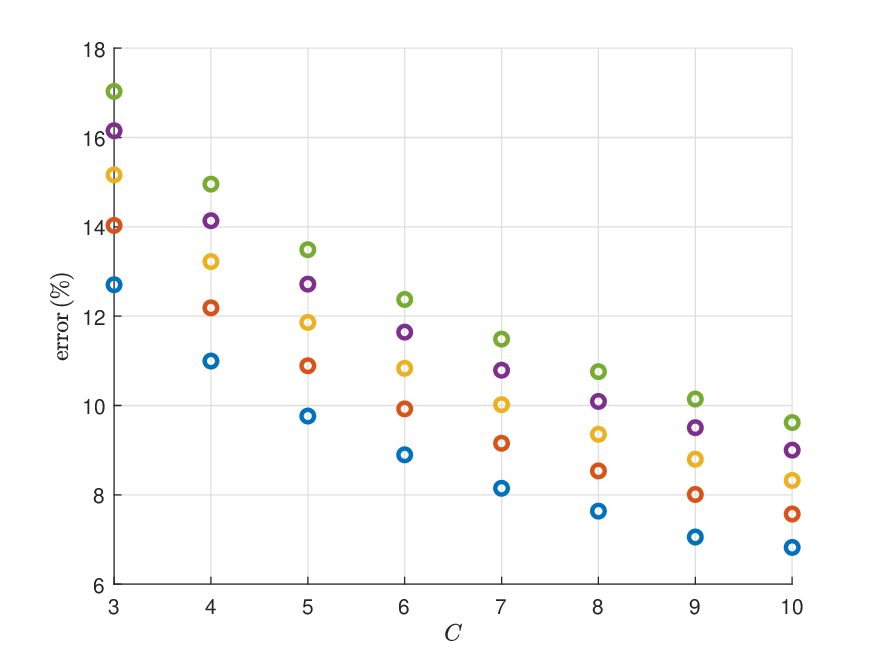}
        \caption{$q=4, n=8$}
        \label{fig:case1_graph2 taylor}
    \end{subfigure}
    \caption{The results of testing Theorem \ref{taylortheo} on two of the Gaussian mixtures in Table \ref{tab}. The graphs show the percentage error $\left({\textstyle\frac{h(\vecX)-\bar{h}^{\mathrm{Taylor}}_{C,m}(\vecX)}{h(\vecX)}}100\right)$ of the approximation $\bar{h}^{\mathrm{Taylor}}_{C,m}(\vecX)$ at different values of $C$ and for different values of $\beta$.}
    \label{taylorfig}
\end{figure}
%======================================================================================================
\section{Discussion}
\label{discsec}
Gaussian mixtures are one of the most applicable distributions in the literature, from wireless communications to physics and diffusion models. They have found success in either describing or modeling real life phenomena to a high degree of accuracy. One of the most important quantities in characterizing a distribution is the Shannon (differential) entropy, which for Gaussian mixtures, does not have a known closed form expression. In this paper, we have derived a method by which the differential entropy of Gaussian mixtures can be approximated both accurately and efficiently. Our method enjoys a high degree of simplicity as it relies on the fact that for Gaussian mixtures \emph{``almost all the support has an image that is almost zero''}. For various configurations, this trick allows our approximation to get very close to the true value for the differential entropy by summing only a small number of terms. Although some configurations are more difficult to approximate with a small number of terms, our method is accurate for a large number of randomly generated Gaussian mixtures (not reported in Section \ref{resultsec}). This makes inaccuracy an exception that can be circumvented by avoiding numerical instabilities and taking into account higher order terms.

It is not straightforward to benchmark against the existing literature. Many methods are at least partly numerical, whereas our approximation technique yields fully analytic expressions. The work \cite{approx} is closest to our technique in that it approximates $h(\vecX)$ with closed-form analytic expressions. However, there are significant qualitative differences. Since \cite{approx} tries to polynomially approximate $\ln f(\vecx)$, the order $C$ of their Taylor series has to increase with $q$ as $C=2q$ (in the worst case) in order to correctly fit all the `bumps' in the function. The error in $\ln f(\vecx)$ then explodes as $|\vecx|^{2q}$ when $|\vecx|$ increases beyond a certain radius. This large error should be damped by the fact that it gets multiplied times the Gaussian tail of $f(\vecx)$. However, at large $q$ this can require drastic application of the splitting trick, which introduces errors. In contrast, our choice of polynomial order $C$ is independent of~$q$, and increasing $q$ does not lead to errors. In our technique, \textcolor[rgb]{0,0,0}{inaccuracies are caused by the mismatch between the function $s\ln\frac1s$ and the polynomial fit around $s=0$. The performance of our technique is constrained by the fact that the function $s\ln\frac1s$ has infinite derivative at $s=0$ and hence cannot be fitted with a polynomial of finite degree at that point.} Our Polyfit method relies on choosing an appropriate weight function $w(s)$ as discussed in Section \ref{polyfitsec}. Ideally for Gaussian mixtures, $w(s)$ should be optimally chosen as $V(s)$ which is difficult to estimate. However, from our experiments we notice that a simple power $w(s)\propto s^r$ gives good results, especially for negative $r$ around $r=-2$. At negative $r$, a lot of weight is given to small values of~$s$. This makes sense because most of the $n$-dimensional volume in the integral $\int\!\rd\vecx$ covers the tails of the distribution $f(\vecX)$. For a single Gaussian with spherical symmetry it is easy to verify (see Appendix~\ref{app:volume}) that the volume function
$V(s)=\int\rd\vecx\; \delta(s-f(\vecx))$ behaves as
$V(s)\propto\frac1s (\ln\frac{f_{\rm max}}{s} )^{\frac n2-1}$. At small $s$ this expression blows up faster than $\frac1s$ but slower than $\frac1{s^2}$. This suggests that $r$ should lie in the vicinity of $[-2,-1]$.

More extensive testing, with larger mixtures in higher dimensions,
will show how generally applicable our method is. We have not exhaustively studied all possible polynomial fits.
It is quite likely that some improvement can be gained by choosing
a better weight function, e.g.\;one that more resembles $V(s)$. 

%Furthermore, it may be possible to get better accuracy by
%writing the entropy as $h(X)=$$\int\!\rd x\; L(x) \frac {f(x)}{L(x)} \ln \frac {L(x)}{f(x)}-\int\!\rd x\; f(x) \ln L(x) $,
%where $L$ is some properly chosen Gaussian (cf. \cite{relat}),
%and then apply a polynomial fit on $s\ln\frac1s$ with $s=f/L$.

\subsection*{Acknowledgements}
This work was supported by NWO grant CS.001 (Forwardt) and by the Dutch Groeifonds project Quantum Delta NL KAT-2.
\bibliographystyle{plain}
\bibliography{references}

%======================================================================================================

\appendix

\section{Proof of Lemma \ref{productgausslemma}}
\label{productgausslemmaapp}
\small
\footnotesize{
\begin{equation}
\begin{aligned}
&\int \prod_{j=1}^{q}g^{t_{j}}_j(\mathbf x) \mathrm{d}{\mathbf x}=\int \prod_{j=1}^{q}\left((2\pi)^{-\frac n2}(\det K_j)^{-1/2}\exp \left\{- \dfrac{1}{2}(\vecx-\vecw_j)^\mathrm T K_j^{-1}(\vecx-\vecw_j)\right\}\right)^{t_{j}} \mathrm{d}{\mathbf x}\\
&=C(\hat{t})\int e^{-\frac{1}{2}\sum_{j=1}^{q}t_{j}(\vecx-\vecw_j)^\mathrm T K_j^{-1}(\vecx-\vecw_j)} \mathrm{d}{\mathbf x}\\
&=C(\hat{t})e^{- \frac{1}{2}\sum_{j=1}^{q}t_{j}\vecw_j^\mathrm T K_j^{-1}\vecw_j}\int\exp \left\{- \dfrac{1}{2}\sum_{j=1}^{q}t_{j}\left(\vecx^\mathrm T K_j^{-1}\vecx-\vecx^\mathrm T K_j^{-1}\vecw_j-\vecw_j^\mathrm T K_j^{-1}\vecx\right)\right\} \mathrm{d}{\mathbf x}\\
&=C(\hat{t})e^{ -\frac{1}{2}\sum_{j=1}^{q}t_{j}\vecw_j^\mathrm T K_j^{-1}\vecw_j}\int\exp \left\{- \dfrac{1}{2}\left(\vecx^\mathrm T \left(\sum_{j=1}^{q}t_{j}K_j^{-1}\right)\vecx-\vecx^\mathrm T \left(\sum_{j=1}^{q}t_{j}K_j^{-1}\vecw_j\right)-\left(\vecw_j^\mathrm T \sum_{j=1}^{q}t_{j}K_j^{-1}\right)\vecx\right)\right\} \mathrm{d}{\mathbf x}\\
&=C(\hat{t})e^{ -\frac{1}{2}\sum_{j=1}^{q}t_{j}\vecw_j^\mathrm T K_j^{-1}\vecw_j}\int\exp \left\{- \dfrac{1}{2}\left(\vecx^{\mathrm T} M^{-1}\vecx-\vecx^{\mathrm T} {M}^{-1}{\bm{\mu}}-{\bm{\mu}}^{\mathrm{T}}{M}^{-1}\vecx\right)\right\} \mathrm{d}{\mathbf x}\\
&=C(\hat{t})e^{-\frac{1}{2}\sum_{j=1}^{q}t_{j}\vecw_j^\mathrm T K_j^{-1}\vecw_j}\exp\left\{\dfrac{1}{2}{\bm{\mu}}^{\mathrm{T}}{M}^{-1}{\bm{\mu}}\right\}\int\exp \left\{- \dfrac{1}{2}\left(\mathbf x-{\bm{\mu}}\right)^{\mathrm T}{M}^{-1}\left(\mathbf x-{\bm{\mu}}\right)\right\} \mathrm{d}{\mathbf x}\\
&=C(\hat{t})e^{-\frac{1}{2}\sum_{j=1}^{q}t_{j}\vecw_j^\mathrm T K_j^{-1}\vecw_j}\exp\left\{\dfrac{1}{2}{\bm{\mu}}^{\mathrm{T}}{M}^{-1}{\bm{\mu}}\right\}(2\pi)^{n/2}(\det {M})^{1/2},
\end{aligned}
\end{equation}
}
\normalsize
where we have:
\begin{equation}
C(\hat{t})=\prod_{j=1}^{q} (2\pi)^{-nt_{j}/2} (\det K_j)^{-t_{j}/2}=(2\pi)^{-n{a}/2}\prod_{j=1}^{q}  (\det K_j)^{-t_{j}/2}.
\end{equation}
Note that we can alternatively write:
\begin{equation}
{\bm{\mu}}^{\mathrm{T}}{M}^{-1}{\bm{\mu}}=\left(\sum_{j=1}^{q}t_{j}K_j^{-1}\vecw_j\right)^{\mathrm{T}}M\left(\sum_{l=1}^{q}t_{l}K_l^{-1}\vecw_l\right).
\end{equation}
\section{Proof of Corollary \ref{fpowerintegralcorr}}
\label{fpowerintegralcorrapp}
\small
\begin{equation}
\begin{aligned}
  &\int  f_{\mathbf{X}}^a(\vecx) \mathrm{d}\mathbf{x}=\int  \left(\sum_{i=1}^{q}p_i g_i(\mathbf x)\right)^a \mathrm{d}\mathbf{x}
  =\int\sum_{{\substack{t_1+t_2+\dots+t_q=a\\t_1, t_2,\dots, t_q \geq 0}}}\binom{a}{t_1,t_2,\dots,t_q}\left(\prod_{i=1}^{q}p_i^{t_i}\right)\left[\prod_{j=1}^{q}g^{t_j}_j(\mathbf x)\right]\mathrm{d}\mathbf{x}\\
  &=\sum_{{\substack{t_1+t_2+\dots+t_q=a\\t_1, t_2,\dots, t_q \geq 0}}} \binom{a}{t_1,t_2,\dots,t_q}\left(\prod_{i=1}^{q}p_i^{t_i}\right)\int \prod_{j=1}^{q}g^{t_j}_j(\mathbf x)\mathrm{d}\mathbf{x}
=\sum_{{\substack{t_1+t_2+\dots+t_q=a\\t_1, t_2,\dots, t_q \geq 0}}} \binom{a}{t_1,t_2,\dots,t_q}\left(\prod_{i=1}^{q}p_i^{t_i}\right){D}(\hat t),
\end{aligned}
\end{equation}
\normalsize
where the last equality follows from Lemma \ref{productgausslemma}. Plugging into equation \eqref{entropyseries} gives the desired expression.

%=====================================================================
\section{Proof of Theorem \ref{taylortheo}}
\label{taylortheorapp}
\begin{equation}
\begin{aligned}
&f_{\vecX}Z^k=f_{\vecX}\sum_{b=0}^{k} \binom{k}{b}\left(-m^{-1}f_{\vecX}\right)^b=\sum_{b=0}^{k} \binom{k}{b}(-m)^{-b}f_{\vecX}^{b+1}=-m\sum_{b=0}^{k} \binom{k}{b}(-1)^{b+1}\dfrac{f_{\vecX}^{b+1}}{m^{b+1}}\\
&=-m\sum_{a=1}^{k+1} \binom{k}{a-1}(-1)^{a}\dfrac{f_{\vecX}^{a}}{m^{a}},
\end{aligned}
\end{equation}
and plugging into equation \eqref{flnftaylor}, we have:
\begin{equation}
\begin{aligned}
-f_{\vecX}\ln \dfrac{f_{\vecX}}{m}=\sum_{k=1}^{\infty}\dfrac{-m}{k}\sum_{a=1}^{k+1} \binom{k}{a-1}(-1)^{a}\dfrac{f_{\vecX}^{a}}{m^{a}}=\sum_{k=1}^{\infty}\sum_{a=1}^{k+1}\tilde{c}_{k,a}f_{\vecX}^{a},
\end{aligned}
\end{equation}
where $\tilde{c}_{k,a}$ is given by:
\begin{equation}
\tilde{c}_{k,a}=\binom{k}{a-1}\dfrac{(-1)^{a+1}}{km^{a-1}}.
\end{equation}
If we now take the first $C-1$ terms, we have:
\begin{equation}
\begin{aligned}
&-f_{\vecX}\ln \dfrac{f_{\vecX}}{m}=\sum_{k=1}^{C-1}\sum_{a=1}^{k+1}\tilde{c}_{k,a}f_{\vecX}^{a}+\cdots=\sum_{k=1}^{C-1}\sum_{a=1}^{C}\tilde{c}_{k,a}f_{\vecX}^{a}+\cdots\\
&=\sum_{a=1}^{C}\sum_{k=1}^{C-1}\tilde{c}_{k,a}f_{\vecX}^{a}+\cdots=\sum_{a=1}^{C}c_a^{\mathrm{Taylor}}f_{\vecX}^{a}+\cdots
\end{aligned}
\end{equation}
where we have:
\begin{equation}
c_{1}^{\mathrm{Taylor}}=\sum_{k=1}^{C-1}\tilde{c}_{k,1}=H_{C-1},
\end{equation}
where $H_k$ is the $k$-th Harmonic number. Furthermore, we have ($a \neq 1$):
\begin{equation}
c_{a}^{\mathrm{Taylor}}=\sum_{k=1}^{C-1}\tilde{c}_{k,a}=\sum_{k=1}^{C-1}\binom{k}{a-1}\dfrac{(-1)^{a+1}}{km^{a-1}}=\sum_{k=a-1}^{C-1}\binom{k}{a-1}\dfrac{(-1)^{a+1}}{km^{a-1}}=\dfrac{(-1)^{a+1}}{m^{a-1}}\dfrac{1}{a-1}\binom{C-1}{a-1},
\end{equation}
and the last equality follows from:
\begin{equation}
\sum_{k=a}^{C}\dfrac{1}{k}\binom{k}{a}=\sum_{\tilde{k}=0}^{C-a}\dfrac{1}{\tilde{k}+a}\binom{\tilde{k}+a}{a}=\sum_{\tilde{k}=0}^{C-a}\dfrac{1}{a}\binom{\tilde{k}+a-1}{a-1}=\dfrac{1}{a}\binom{C}{a},
\end{equation}
where we used the binomial identities:
\begin{align}
&\dfrac{1}{a}\binom{a}{b}=\dfrac{1}{b}\binom{a-1}{b-1},\quad a,b\neq 0,\\
&\sum_{j=0}^{m}\binom{n+j}{n}=\binom{n+m+1}{n+1}=\binom{n+m+1}{m}.
\end{align}
Plugging the coefficients $\{c_{a}^{\mathrm{Taylor}}\}$ into Corollary \ref{fpowerintegralcorr}, we have:
\begin{equation}
\begin{aligned}
&h(\vecX)\approx-\ln m+H_{C-1}\sum_{{\substack{t_1+t_2+\dots+t_q=1\\t_1, t_2,\dots, t_q \geq 0}}} \binom{1}{t_1,t_2,\dots,t_q}\left(\prod_{i=1}^{q}p_i^{t_i}\right){D}(\hat t)\\
&+\sum_{a=2}^{C}\dfrac{(-1)^{a+1}}{m^{a-1}}\dfrac{1}{a-1}\binom{C-1}{a-1}\sum_{{\substack{t_1+t_2+\dots+t_q=a\\t_1, t_2,\dots, t_q \geq 0}}} \binom{a}{t_1,t_2,\dots,t_q}\left(\prod_{i=1}^{q}p_i^{t_i}\right){D}(\hat t)\\
&=-\ln m+H_{C-1}+\sum_{a=1}^{C-1}\dfrac{(-1)^{a}}{m^{a}}\dfrac{1}{a}\binom{C-1}{a}\sum_{{\substack{t_1+t_2+\dots+t_q=a+1\\t_1, t_2,\dots, t_q \geq 0}}} \binom{a+1}{t_1,t_2,\dots,t_q}\left(\prod_{i=1}^{q}p_i^{t_i}\right){D}(\hat t)\\
&=-\ln m+H_{C-1}-m\sum_{a=1}^{C-1}\dfrac{B_{a+1}}{a}\binom{C-1}{a},
\end{aligned}
\end{equation}
where we used:
\begin{equation}
\sum_{{\substack{t_1+t_2+\dots+t_q=1\\t_1, t_2,\dots, t_q \geq 0}}} \binom{1}{t_1,t_2,\dots,t_q}\left(\prod_{i=1}^{q}p_i^{t_i}\right){D}(\hat t)=1.
\end{equation}

%===================================================================
\section{Proof of Lemma \ref{generalpolyfitlemma}}
\label{generalpolyfitlemmaproof}
If we take the partial derivative of $E$ w.r.t $d_{\mathcal{I}, i}$, we get:
\begin{equation}
\dfrac{\partial E}{\partial d_{\mathcal{I}, i}}=\dfrac{-2}{b-a}\int_{a}^{b} w(s)(f(s)-\hat{f}(s))s^{i}\mathrm{d}s=\dfrac{-2}{b-a}\left(\int_{a}^{b} w(s)f(s)s^{i}\mathrm{d}s-\sum_{j=1}^{C}d_{\mathcal{I}, j}\int_{a}^{b} w(s)s^{i+j}\mathrm{d}s\right),
\end{equation}
and setting it to zero, we get:
\begin{equation}
\sum_{j=1}^{C}d_{\mathcal{I}, j}\int_{a}^{b} w(s)s^{i+j}\mathrm{d}s=\int_{a}^{b} w(s)f(s)s^{i}\mathrm{d}s,
\end{equation}
or written in matrix form as:
\begin{equation}
M_{\mathcal{I}}\vec{d}_{\mathcal{I}}=\vec{z}_{\mathcal{I}}.
\end{equation}
In order to show that our solution is a global minimum, we show that the Hessian matrix $H_{ij}=\dfrac{\partial E}{\partial d_{\mathcal{I}, i}\partial d_{\mathcal{I}, j}}$ is positive definite. Taking the partial derivatives, we have:
\begin{equation}
H_{ij}=\dfrac{2}{b-a}\int_{a}^{b}w(s)s^{i}s^j \mathrm{d}s,
\end{equation}
which is a Gram matrix (note that $w(s)>0$ and a Gram matrix is positive semi-definite). It is straightforward to see that $H$ is positive definite since $\{\sqrt{w(s)}s^i\}_{i=1}^{C}$ are linearly independent.

%===================================================================
\section{Proof of Corollary \ref{Mtildecorr}}
\label{Mtildecorrproofapp}
From Lemma \ref{slogslemma} and equation \eqref{dmatrixeq}, we have:
\begin{equation}
\begin{aligned}
&\sum_{j}(M_{\mathcal{I}})_{ij}(\vec{d}_{\mathcal{I}})_j=(\vec{z}_{\mathcal{I}})_i \Rightarrow \sum_{j}\dfrac{b^{i+j+r+1}}{i+j+r+1}(\vec{d}_{\mathcal{I}})_j=\dfrac{b^{i+r+2}}{(i+r+2)^2}-\dfrac{b^{i+r+2}\ln b}{i+r+2}\\
&\Rightarrow \sum_{j}\dfrac{b^{i+j+r+1}}{i+j+r+1}(\vec{d}_{\mathcal{I}})_j+\dfrac{b^{i+r+2}\ln b}{i+r+2}=\dfrac{b^{i+r+2}}{(i+r+2)^2}\\
&\Rightarrow \sum_{j\neq 1}\dfrac{b^{i+j+r+1}}{i+j+r+1}(\vec{d}_{\mathcal{I}})_j+\dfrac{b^{i+r+2}}{i+r+2}(\vec{d}_{\mathcal{I}})_1+\dfrac{b^{i+r+2}\ln b}{i+r+2}=\dfrac{b^{i+r+2}}{(i+r+2)^2}\\
&\Rightarrow \sum_{j\neq 1}\dfrac{b^{j-1}}{i+j+r+1}(\vec{d}_{\mathcal{I}})_j+\dfrac{(\vec{d}_{\mathcal{I}})_1+\ln b}{i+r+2}=\dfrac{1}{(i+r+2)^2}\\
&\Rightarrow \sum_{j\neq 1}\dfrac{b^{j-1}}{i+j+r+1}(\vec{d}_{\mathcal{I}})_j+\left.\left(\dfrac{b^{j-1}(\vec{d}_{\mathcal{I}})_j+\ln b}{i+j+r+1}\right)\right|_{j=1}=\dfrac{1}{(i+r+2)^2}\\
&\Rightarrow \sum_{j}\tilde{M}_{ij}\tilde{d}_j=\tilde{z}_i.
\end{aligned}
\end{equation}

%=====================================================================

\section{Volume $V(s)$ for a single Gaussian}
\label{app:volume}

We consider the trivial `mixture' that consists of a single Gaussian centered on the origin,  with spherical
covariance matrix $\sigma^2 {\mathbf 1}$.
The volume $V(s)$ as defined in (\ref{entropyins}) is then given by
\begin{equation}
    V(s) = \int\rd \vecx\; \delta\left(s-{\cal N}_{0,\sigma^2 {\mathbf 1}}(\vecx)\right)
    \propto
    \int_0^\infty \rd r \; r^{n-1} \delta\left(s-\frac{\exp-\frac{r^2}{2\sigma^2}}{(\sigma\sqrt{2\pi})^n}\right).
\end{equation}
We apply the rule $\delta(s-y(r)) = \frac{ \delta(r-y^{\rm inv}(s)) }{|y'(r)|}$ and obtain
\begin{equation}
    V(s) \propto \int_0^\infty \rd r \; r^{n-1} \frac{  \delta\Big(r-  \sqrt{-2\sigma^2\ln[s(\sigma\sqrt{2\pi})^n]} \,\Big)  }
    { \frac{\exp-\frac{r^2}{2\sigma^2}}{(\sigma\sqrt{2\pi})^n} \cdot \frac{r}{\sigma^2}}
    \propto
    \frac1s  \left(\sqrt{\ln\frac1{s(\sigma\sqrt{2\pi})^n}} \right)^{n-2}
    = \frac1s  \Big(\ln\frac{f_{\rm max}}{s} \Big)^{\frac n2-1}.
\end{equation}
Here we have used that
$\frac{\exp-\frac{r^2}{2\sigma^2}}{(\sigma\sqrt{2\pi})^n} =s$
and that $ f_{\rm max}=\frac{1}{(\sigma\sqrt{2\pi})^n}$.
\end{document}